\newlength{\halfpagewidth}
\newtheorem{theorem}{\textbf{Theorem}}
\newtheorem{lemma}{\textbf{Lemma}}
\newtheorem{corollary}{\textbf{Corollary}}
\newtheorem{proof}{\textbf{Proof}}
\newtheorem{definition}{\textbf{Definition}}
\newtheorem{proposition}{\text{Proposition}}
\def\ScaleIfNeeded{%
\ifdim\Gin@nat@width>\linewidth \linewidth \else \Gin@nat@width
\fi } \makeatother
\begin{document}
%

\title{Energy and Latency Control for Edge Computing in Dense V2X Networks}

\author{Jingjing Zhao, Lifeng Wang,~\IEEEmembership{Member,~IEEE,}  Kai-Kit~Wong,~\IEEEmembership{Fellow,~IEEE,}\\ Meixia Tao,~\IEEEmembership{Senior Member,~IEEE}, and Toktam Mahmoodi,~\IEEEmembership{Senior Member,~IEEE}
\thanks{The work was supported by EC funded H2020 5GCAR project.}
\thanks{J. Zhao and T. Mahmoodi are with the Department of Informatics, King's College London, London, UK (Email: \{jingjing.2.zhao, toktam.mahmoodi\}@kcl.ac.uk).}
\thanks{L. Wang and K.-K. Wong are with the Department of Electronic and
Electrical Engineering, University College London, London, UK (Email: \{lifeng.wang, kai-kit.wong\}@ucl.ac.uk).}
\thanks{M. Tao is with the Department of Electronic Engineering, Shanghai Jiao Tong University, Shanghai, China (Email: $\rm{mxtao}@sjtu.edu.cn$).}}

\maketitle

\begin{abstract}
This study focuses on edge computing in dense millimeter wave vehicle-to-everything (V2X) networks. A control problem is formulated to minimize the energy consumption under delay constraint resulting from vehicle mobility. { A tractable algorithm is proposed to solve this problem by optimizing the offloaded computing tasks and transmit power of vehicles and road side units. {The proposed dynamic solution can well coordinate the interference without requiring global channel state information}, and  makes a tradeoff between energy consumption and task computing latency.}
\end{abstract}

\begin{IEEEkeywords}
V2X, edge computing, millimeter wave.
\end{IEEEkeywords}

\vspace{-0.2cm}
\section{Introduction}
Edge computing enables cloud computing capabilities at the edge of wireless networks  for ultra-low latency and high energy efficiency~\cite{Mao_2017}. It is considered as a very prominent technology by cellular and automotive industries and organizations including ETSI and 5GAA for vehicle-to-everything (V2X) networks to support various computation-intensive services for connected vehicles and autonomous driving~\cite{5GAA-VISUAL}. {Existing contributions such as~\cite{Jingyun_2017} have studied autonomous vehicular edge computing via vehicle-to-vehicle (V2V) links between moving vehicles.} Such edge computing services could be used to complete ``cooperative maneuver" or ``cooperative safety" use cases.  For example in the cooperative lane merge scenario, vehicles send their position and speed information to a road side unit
(RSU), based on which the RSU computes an updated local dynamic map. The RSU also computes a list of recommendations for the vehicle, e.g., time to merge and speed of merging. As a consequence, the presence of RSUs can be beneficial not only to host computational power for the data processing, but also to send trajectory recommendation to vehicles.

Task offloading scheduling and resource allocation are critical issues for edge computing in V2X networks due to high speed movement of vehicles. In particular, the maximum number of computing tasks to be offloaded from a vehicle to a  RSU should be controlled carefully to ensure that the offloaded tasks are executed and the computing outputs  are sent back to the vehicle by the same RSU in non-cooperative scenarios. In this work, we study the offloading scheduling and resource allocation for edge computing in dense millimeter wave (mmWave) V2X networks, where there are a large number of connected vehicles
and there is a need for interference coordination among nearby vehicles.  Our aim is to minimize the energy consumption of computing process under delay constraint through optimizing the number of offloaded tasks and transmit powers of vehicles and RSUs.
\vspace{-0.2cm}
\section{{System Model}}

\begin{figure}[t!]
\centering
\includegraphics[width=3.0 in]{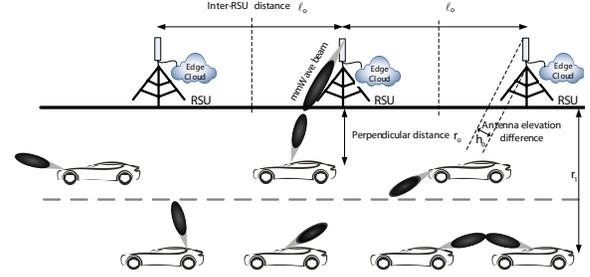}
\caption{An illustration of mmWave V2X network, where there exist vehicle-to-vehicle (V2V) and vehicle-to-RSU (V2I) transmissions.}
\label{RSU_edge}
\vspace{-0.3cm}
\end{figure}
{We consider a  mmWave V2X network, as shown in Fig. \ref{RSU_edge}, where vehicles travel in a highway and RSUs (sites on the lamp posts) are capable of edge computing.} For simplicity, the highway has two lanes\footnote{Note that our model can be easily extended to arbitrary number of lanes.}. It is assumed that {\color{blue}the locations of vehicles in each lane follow independent homogeneous one-dimensional Poisson point process $\Phi_s$ with
the density $\lambda_{s=\{1,2\}}$ ({RSUs can obtain $\lambda_{s=\{1,2\}}$ values via core network from application servers that provide analytics on road traffic, which could be maintained by government agencies or road authorities.}),} and RSUs are evenly distributed with equal distance of $\ell_o$\footnote{Such configuration was also shown in Fig. 3 of~\cite{robert_v2v} where average rate and outage were analyzed in a single RSU downlink transmission case.}.
Each vehicle and RSU are assumed to be equipped with directional antennas with sectored  beam pattern, and the antenna gain for node $i$ in this network is modeled as a function of
steering angle $\phi$ given by,
\begin{align}
\label{array_gain_pattern}
{G_\mathrm{b}^{i}}\left(\phi\right) = \left\{ \begin{array}{l}
G_\mathrm{max}^{i}, \; \mathrm{if} \left|\phi\right|<\phi_\mathrm{b},\\
G_\mathrm{min}^{i}, \; \mathrm{otherwise}
\end{array} \right.
\end{align}
where $\phi_\mathrm{b}$ is the beam-width, $G_\mathrm{max}^{i}$ and $G_\mathrm{min}^{i}$ are main-lobe and side-lobe gains, respectively.
Consider discrete time slots of unit length indexed by $t=1, 2,\dots, T_\mathrm{end}$\footnote{Since the new computing tasks are generated at each time slot and the computing time is reliant on the offloaded tasks at each time slot, the length of each time slot is independent of computing time.}. At each time slot, a vehicle generates computing tasks randomly according to a certain distribution and schedules some tasks to offload them to its associated RSU for execution  based on the task queue state and the channel state.
{\color{blue}{Note that the reasons that a vehicle offloads its computing tasks to the RSU are various, e.g., the hardware incapabilities or lack of big data involving real-time road traffic, road conditions, and parking areas, etc. Such types of computing tasks should be processed at RSUs.}}
Without loss of generality, we consider the computation offloading from a typical vehicle in the first lane, denoted as $o$.
 In the following, we detail
the vehicle's communication channel model, the queue that stores requests/tasks at the vehicle, and the energy consumption model. The optimization problem is formulated to find the optimal transmission  power that allows the vehicle's computing
tasks to be computed within the coverage area of one RSU and keeps the network stable. { RSU acts as the controller, and the control decision is updated at each time slot.}
\vspace{-0.2cm}
\subsection{Dynamic Vehicular Channel Model with Mobility}

Let  $r_o$ denote the  perpendicular distance between the typical vehicle $o$ on the first lane  and each RSU, and $r_1$ denote the  perpendicular distance between a vehicle on the second lane and each RSU, which are assumed to be fixed. Let $h_o$ denote the absolute antenna elevation difference between the vehicle and RSU. Since the effect of small-scale fading on mmWave high-directional communications can be negligible~\cite{TED2013IEEE_Access}, { the three-dimension (3D) line-of-sight (LoS) vehicular channel gain between the typical vehicle $o$ and the associated RSU at time $t$ can be modeled as,
\begin{align}
L_o\left(t\right)=\beta \left({\ell_t^2}+r_o^2+h_o^2\right)^{-\alpha_\mathrm{L}/2},
\end{align}
where {\small{$\ell_t=\left\{ \begin{array}{l}
\frac{{{\ell _o}}}{2} -\bmod \left( {{V_o}t,\frac{{{\ell _o}}}{2}} \right){\rm{,}}\;\; \mathrm{if} \bmod \left( \left\lfloor  \frac{V_ot}{\ell _o/2}\right\rfloor {\rm{,}}2 \right){\rm{ = }}0\\
 \bmod \left( {{V_o}t,\frac{{{\ell _o}}}{2}} \right){\rm{,}}\;\; \mathrm{if} \bmod \left(  \left\lfloor \frac{V_ot}{\ell _o/2} \right\rfloor{\rm{,}}2 \right){\rm{ = }}1
\end{array} \right.$}} is the  horizontal distance  and $\beta$ is the frequency dependent constant value; $\alpha_\mathrm{L}$ is the pathloss exponent, and $V_o$ is
the moving speed of vehicle $o$,} which is assumed to be constant during the time for edge computing service.

\vspace{-0.1 cm}
\subsection{Computing Task Queue Model} 
Let $D_o(t) \in \left\{0,1,\dots, D_{\mathrm{max}}\right\}$ denote the number of new arrival tasks from vehicle $o$ at each time slot $t$, {which is independent of computing time}. The total amount of scheduled tasks for offloading to the RSU at time $t$ is denoted by $C_{\rm in}\left(t\right)$, which needs to be determined at each time slot before computing process.
As such, the computing task queue length $Q_{o}\left(t\right)$ for the vehicle $o$ evolves as follows:
\begin{align}\label{Data_Queue_model}
Q_{o}\left(t+1\right)=\left[Q_{o}\left(t\right)-C_{\rm in}\left(t\right)\right]^{+}+D_o\left(t\right),
\end{align}
where $[x]^+=\max\left\{0,x\right\}$ and $Q_o\left(0\right)=0$.

The total latency of computing process  consists of three parts: 1)  time for uploading tasks from the vehicle to its associated RSU, denoted as $\tau_{1,t}$; 2) time for task execution at the RSU, denoted as $\tau_{2,t}$; and 3) time for downloading computing output from the RSU to the vehicle,  denoted as $\tau_{3,t}$. Considering the fact that the computing output has to be sent back to the vehicle by the same RSU before the vehicle moves to the next RSU, the total number of scheduled tasks $C_{\rm in}\left(t\right)$ in each time slot has to satisfy{\footnote{{Currently, the research on cooperation between RSUs is not addressed.}}}:
\begin{align}\label{offload_constraint}
\underbrace{\frac{C_{\rm in}\left(t\right)}{C_o^{\mathrm{v}}\left(t\right)}}_{\tau_{1,t}}+\underbrace{\frac{\vartheta C_{\rm in}\left(t\right)}{f_\mathrm{R}}}_{\tau_{2,t}}+
\underbrace{\frac{C_{\rm out}\left(t\right)}{C_o^{\mathrm{R}}\left(t\right)}}_{\tau_{3,t}}\leq \frac{\ell_o-\mathrm{mod}(V_o t,\ell_o)}{V_o},
\end{align}
where $\vartheta$ is the number of CPU cycles per bit required for computing  and $f_\mathrm{R}$ is the RSU's CPU clock speed, $C_{\rm out}\left(t\right)$ is the amount of output data after computing (usually $C_{\rm out}\left(t\right) \ll C_{\rm in}\left(t\right)$),  $C_o^{\mathrm{v}}\left(t\right)$ and $C_o^{\mathrm{R}}\left(t\right)$ are the uplink and downlink transmission rate between the vehicle $o$ and its serving RSU, respectively, which are given by
\begin{align}\label{Average_rate}
{\small{C_o^{\mathrm{v}}\left(t\right)=W \log_2\left(1+\frac{P_\mathrm{v}\left(t\right)L_o\left(t\right)G_\mathrm{max}^{\mathrm{v}} G_\mathrm{max}^{\mathrm{R}}}{{\color{blue}I_o^{\mathrm{R}}\left(t\right)}+\sigma^2}\right)}}
\end{align}
with $I_o^{\mathrm{R}}\left(t\right)=\sum\limits_{s{\rm{ = }}1}^2 \sum\limits_{i \in \Phi_s/o} {P_i\left(t\right) {G_\mathrm{b}^{{\rm v}_i}}} G_\mathrm{b}^{{\rm R}} L_i\left(t\right)$, and
\begin{align}\label{Average_rate_1}
\hspace{-0.55cm}{\small{C_o^{\mathrm{R}}\left(t\right)=W \log_2\left(1+\frac{P_\mathrm{R}\left(t\right)L_o\left(t\right) G_\mathrm{max}^{\mathrm{R}}G_\mathrm{max}^{\mathrm{v}} }{\sigma^2}\right)}},
\end{align}
respectively, where $W$ is the mmWave bandwidth, $P_\mathrm{v}\left(t\right)$ and $P_\mathrm{R}\left(t\right)$ are the vehicle $o$ and its serving RSU's transmit power, respectively; $P_i\left(t\right)$ is the interfering vehicle $i$'s transmit power; {$L_i\left(t\right)=\beta {d_i}^{-\alpha_\mathrm{L}}$ is pathloss, in which $d_i$ is the 3D distance between vehicle $i$ and vehicle $o$'s associated RSU; } {\color{blue}$I_o^{\mathrm{R}}\left(t\right)$ is the interference received at the serving RSU,  which  results from  vehicle-to-pedestrian (V2P), vehicle-to-vehicle (V2V) and vehicle-to-RSU (V2I) transmissions}; $\sigma^2$ is the noise power. Note that pessimistic interfering environment is considered, i.e., all the interfering links are LoS. In \eqref{Average_rate_1}, inter-RSU interference is avoided by using narrow beams~\cite{robert_v2v}, and the frequency bandwidths allocated to the vehicle and RSU are orthogonal such that there is no downlink-to-uplink and uplink-to-downlink interference, i.e., cross-link interference is precluded.

\vspace{-0.3 cm}
\subsection{Energy Consumption Model}
We consider computing and communication as the two main contributors to the energy consumption.
The computation energy consumption for the associated RSU at time $t$ is expressed in \eqref{EC1} based on the model in ~\cite{Mao_2017}
\begin{align}\label{EC1}
E_\mathrm{c}^{\mathrm{R}}\left(t\right)=\varrho  C_{\rm in}\left(t\right) \vartheta f_\mathrm{R}^2,
\end{align}
where $\varrho$ is the effective switched capacitance of the RSU processor. Therefore, the whole energy consumption of edge computing for vehicle $o$ at time $t$ is calculated as
\begin{align}
E(t)=E_\mathrm{c}^{\mathrm{R}}\left(t\right)+P_\mathrm{v}\left(t\right)\tau_{1,t}+P_\mathrm{R}\left(t\right)\tau_{3,t}.
\end{align}
\vspace{-0.7cm}

\subsection{Problem Formulation}
While transmitting at the highest power can hypothetically provide the highest rate, and lower the communication latency, it would also increase the interference and inversely impact the reliability as well as latency of the communication. To this end, we formulate an optimization problem to minimize the average energy consumption while keeping the network stable.
 The variables that we optimize are the offloaded computing task $C_{\rm in}\left(t\right)$, transmit powers of the vehicle $o$ and
  its serving RSU $P_\mathrm{v}(t)$ and $P_\mathrm{R}(t)$, at every time slot. Thus, the problem is formulated as
\begin{align}\label{Problem_formulation}
&\mathop {\min }\limits_{ C_{\rm in}, P_\mathrm{v}, P_\mathrm{\mathrm{R}}}   \mathop {\lim }\limits_{T \to \infty } \sup \frac{1}{T}\sum\limits_{t{\rm{ = }}0}^{T - 1} {\mathbb{E}\left[ {E(t)} \right]} \\
&\mathrm{s.t.}~\mathrm{C1}:\mathop {\lim }\limits_{T \to \infty } \sup \frac{1}{T}\sum\limits_{t{\rm{ = }}0}^{T - 1} {\mathbb{E}\left[  {{Q_{o}}\left( t \right)}   \right]}  < \infty , \nonumber\\
&~~~~~\mathrm{C2}: \eqref{offload_constraint}, \forall t,~~~~\mathrm{C3}:  C_{\rm in}\left(t\right) \leq Q_{o}\left(t\right), \forall t,\nonumber\\
&~\mathrm{C4}: C_{\rm in}\left(t\right) \in \mathbb{Z}^+, 0 \leq P_\mathrm{v}\left(t\right) \leq P_\mathrm{v}^{\max} ,   0 \leq P_\mathrm{\mathrm{R}}\left(t\right)\leq
P_\mathrm{R}^{\max},  \forall t, \nonumber
\end{align}
where $P_\mathrm{v}^{\max}$ and $P_\mathrm{R}^{\max}$ are peak power constraints on the vehicle and the RSU, respectively. Constraint C1 ensures that the length of the task queue does not grow unbounded; C2 ensures that offloaded tasks are computed within the required time budget.

\section{Algorithm Design}
{In this section, we develop an online algorithm for solving the stochastic optimization problem \eqref{Problem_formulation} using a  Lyapunov approach~\cite{MJ_Neely_2010}, which only needs the knowledge of the task traffic
  and channel states of the current time slot.} We first define the Lyapunov function as $\mathcal{L}\left(t\right)=\frac{1}{2} Q_o^2\left(t\right)
$,
which is used to measure the task flow of this system.  Then, the expected difference for the Lyapunov function between the time slot $t$ and $(t+1)$ is measured by using the one-time conditional Lyapunov drift, which is
\begin{align}\label{lya_drift}
\Delta\left(t\right)=\mathbb{E}\Big[\mathcal{L}\left(t+1\right)-\mathcal{L}\left(t\right)| Q_o\left(t\right) \Big].
\end{align}
 By introducing the penalty term $\eta \mathbb{E}\Big[E(t)| Q_o\left(t\right) \Big]$ with the non-negative control variable $\eta$, the drift-plus-penalty is
 \begin{align}\label{drift_plus_penalty}
 \Delta_\eta\left(t\right)=\Delta\left(t\right)+\eta \mathbb{E}\Big[E(t)| Q_o\left(t\right) \Big],
 \end{align}
 {\color{blue}where $\eta$ represents the price of energy consumption, lower $\eta$ means that more
 energy will be consumed to accomplish
 more computing tasks. }
\begin{lemma}
For any feasible values of $P_\mathrm{v}\left(t\right)$, $P_\mathrm{R}\left(t\right)$, $\ell_o$, $V_o$, $\eta$ and $Q_o\left(t\right)$, the drift-plus-penalty is upper bounded as
\begin{align}\label{penalty_drift}
 &\Delta_\eta\left(t\right) \leq \mathcal{A}-Q_o\left(t\right)\mathbb{E}\Big[\left(C_{\rm in}\left(t\right)-D_o\left(t\right)\right)| Q_o\left(t\right) \Big] \nonumber\\
& \qquad\qquad~+\eta \mathbb{E}\Big[E(t)| Q_o\left(t\right) \Big],
\end{align}
where $\mathcal{A}$ is a constant satisfying
$\mathcal{A} \geq \frac{\mathbb{E}\Big[{C_{\rm in}^2}\left(t\right)| Q_o\left(t\right) \Big]+D_{\max}^2}{2}$.
\end{lemma}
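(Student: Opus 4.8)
The plan is to obtain the bound directly from the queue recursion \eqref{Data_Queue_model} by the standard Lyapunov manipulation: square the recursion, apply two elementary facts about the projection $[\cdot]^+$, and then take the conditional expectation. I would begin by expanding $Q_o^2(t+1)=\big([Q_o(t)-C_{\rm in}(t)]^+\big)^2+2D_o(t)[Q_o(t)-C_{\rm in}(t)]^++D_o^2(t)$.

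The two facts that drive the entire argument are (i) $\big([Q_o(t)-C_{\rm in}(t)]^+\big)^2\le\big(Q_o(t)-C_{\rm in}(t)\big)^2$, which holds because the clipped quantity either equals $Q_o(t)-C_{\rm in}(t)$ or is zero while $(Q_o(t)-C_{\rm in}(t))^2\ge0$; and (ii) $[Q_o(t)-C_{\rm in}(t)]^+\le Q_o(t)$, which uses $C_{\rm in}(t)\ge0$ (guaranteed by C4). Substituting (i) into the squared term and (ii) into the cross term and collecting gives the deterministic inequality $Q_o^2(t+1)\le Q_o^2(t)+C_{\rm in}^2(t)+D_o^2(t)-2Q_o(t)\big(C_{\rm in}(t)-D_o(t)\big)$. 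Multiplying through by $1/2$ and recognizing $\mathcal{L}(t)=\tfrac12Q_o^2(t)$ converts the left-hand side into $\mathcal{L}(t+1)-\mathcal{L}(t)$.

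Next I would take the conditional expectation $\mathbb{E}[\,\cdot\mid Q_o(t)]$ of this inequality to form $\Delta(t)$ as defined in \eqref{lya_drift}; since $Q_o(t)$ is fixed under the conditioning, the factor $Q_o(t)$ pulls out of the expectation in the cross term. The remaining step is to absorb the arrival term: because $D_o(t)\le D_{\max}$ pointwise, one has $\tfrac12\mathbb{E}[D_o^2(t)\mid Q_o(t)]\le\tfrac12 D_{\max}^2$, so $\tfrac12\big(\mathbb{E}[C_{\rm in}^2(t)\mid Q_o(t)]+\mathbb{E}[D_o^2(t)\mid Q_o(t)]\big)$ is dominated by any constant $\mathcal{A}$ with $\mathcal{A}\ge\tfrac12\big(\mathbb{E}[C_{\rm in}^2(t)\mid Q_o(t)]+D_{\max}^2\big)$. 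Adding the penalty $\eta\mathbb{E}[E(t)\mid Q_o(t)]$ to both sides then reproduces \eqref{penalty_drift} exactly.

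There is no deep obstacle here; once the recursion is squared the derivation is mechanical. The one point that requires care is the handling of the projection: the squared term is bounded via (i) while the cross term is bounded via the separate inequality (ii). It is the use of $Q_o(t)$---rather than the possibly negative $Q_o(t)-C_{\rm in}(t)$---as the majorant in the cross term that yields the clean drift coefficient $-2Q_o(t)(C_{\rm in}(t)-D_o(t))$, and this is legitimate precisely because $C_{\rm in}(t)\ge0$. Keeping both $C_{\rm in}(t)$ and $D_o(t)$ nonnegative throughout is exactly what validates (i) and (ii).
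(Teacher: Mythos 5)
Your derivation is correct and is precisely the standard squaring argument from Neely's drift-plus-penalty framework that the paper itself invokes (it cites~\cite[Chapter 4]{MJ_Neely_2010} in lieu of writing out the proof), so you have simply filled in the details the authors omitted. The handling of the projection via $([x]^+)^2\le x^2$ for the quadratic term and $[Q_o(t)-C_{\rm in}(t)]^+\le Q_o(t)$ (with $D_o(t)\ge 0$ preserving the inequality in the cross term) is exactly the intended route, and the bound on $\mathcal{A}$ matches the lemma statement.
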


 The proof of \textbf{Lemma 1} can be easily obtained by following the approach in~\cite[Chapter 4]{MJ_Neely_2010}.
{In light of stochastic network optimization, control decision is made at every time $t$ for minimizing the upper bound of drift-plus-penalty given in the right-hand-side (RHS) of \eqref{penalty_drift}, which guarantees that constraint C1 of problem \eqref{Problem_formulation} is met~\cite{MJ_Neely_2010}. Thus, the task offloading and transmit power control can be made at each time slot by solving the following problem:
\begin{align}\label{Problem_11_final}
\hspace{-0.2cm}\mathop {\min }\limits_{{C_{\rm in}\left(t\right),P_\mathrm{v}\left(t\right), P_\mathrm{\mathrm{R}}\left(t\right) }} &  - Q_o\left(t\right)C_{\rm in}\left(t\right)+\eta E_\mathrm{c}^{\mathrm{R}}\left(t\right) \nonumber\\
&\hspace{-0.2cm}~+\eta\left(P_\mathrm{v}\left(t\right)\frac{C_{\rm in}\left(t\right)}{C_o^{\mathrm{v}}\left(t\right)}+P_\mathrm{R}\left(t\right)\frac{C_{\rm out}\left(t\right)}{C_o^{\mathrm{R}}\left(t\right)}\right)\\
\mathrm{s.t.}\qquad &\mathrm{C2}, \mathrm{C3}, \mathrm{C4}.\nonumber
\end{align}}
 {The above is a mixed integer and non-convex problem, due to the integral of $C_{\rm in}\left(t\right)$ and non-convexity of the objective function and constraint $\mathrm{C2}$. Moreover, the amount of interference received at the typical vehicle's associated RSU is difficult to evaluate in such fast changing and complicated networks. To make the problem \eqref{Problem_11_final} more tractable, we relax $C_{\rm in}\left(t\right) \in \mathbb{Z}^{+}$ to $C_{\rm in}\left(t\right)\geq 0$,}  and transform problem \eqref{Problem_11_final} as,
\begin{align}\label{Problem_Trans1}
&\mathop {\min }\limits_{{C_{\rm in}\left(t\right),P_\mathrm{v}\left(t\right), P_\mathrm{\mathrm{R}}\left(t\right) }}  - Q_o\left(t\right)C_{\rm in}\left(t\right)+\eta E_\mathrm{c}^{\mathrm{R}}\left(t\right) \nonumber\\
&\quad+\eta\left(P_\mathrm{v}\left(t\right)\frac{C_{\rm in}\left(t\right)}{\overline{C}_o^{\mathrm{v}}\left(t\right)}+P_\mathrm{R}\left(t\right)
\frac{C_{\rm out}}
{{C}_o^{\mathrm{R}}\left(t\right)}\right)\\
& \mathrm{s.t.} ~~ \mathrm{C2}: \frac{C_{\rm in}\left(t\right)}{\overline{C}_o^{\mathrm{v}}\left(t\right)}+\frac{\vartheta C_{\rm in}\left(t\right)}{f_\mathrm{R}}+
\frac{C_{\rm out}}{{C}_o^{\mathrm{R}}\left(t\right)}\leq \frac{\ell_o-\mathrm{mod}(V_o t,\ell_o)}{V_o},  \nonumber\\
&\qquad \mathrm{C3},~~\mathrm{C4},~~\mathrm{C5}: I_o^{\mathrm{R}}\left(t\right) \leq  I_\mathrm{th}^{\mathrm{R}},\nonumber
\end{align}
where $\overline{C}_o^{\mathrm{v}}\left(t\right)=W \log_2\bigg(1+\frac{P_\mathrm{v}\left(t\right)L_o\left(t\right)G_\mathrm{max}^{\mathrm{v}} G_\mathrm{max}^{\mathrm{R}}}{I_\mathrm{th}^{\mathrm{R}}+\sigma^2}\bigg)$, $I_\mathrm{th}^{\mathrm{R}}$ is the maximum uplink  interference temperature. Thus, constraint C5 represents the maximum uplink interference that can be tolerated. We note that the transformed problem \eqref{Problem_Trans1} provides a performance lower bound of the original problem \eqref{Problem_11_final}, since the lower-bounded rate $\overline{C}_o^{\mathrm{v}}\left(t\right)$ is adopted and less tasks will be offloaded for satisfying C2. However, in practice, it is still challenging to solve \eqref{Problem_Trans1} because of the unpredictable interference $I_o^{\mathrm{R}}\left(t\right)$
at time $t$ in C5. {\color{blue}The following Lemma 2  provides an alternative to satisfy C5 without requiring global channel
state information such as the interfering vehicles' locations and moving speeds, etc.}
\begin{lemma}\label{lemma_1}
{Given an arbitrary small $\epsilon$, $\Pr\left(I_o^{\mathrm{R}}\left(t\right) \geq  I_\mathrm{th}^{\mathrm{R}} \right) \leq \epsilon$  when the vehicle's transmit power satisfies
$P_\mathrm{v}\left(t\right)\leq \frac{\epsilon I_\mathrm{th}^{\mathrm{R}}}{\Xi_1 \Upsilon}$, where $\Xi_1=\mathbb{ E}\left[{G_\mathrm{b}^{{\rm v}_i}} G_\mathrm{b}^{{\rm R}} \right]$ is the uplink average antenna gain from an interfering vehicle to the RSU, and $\Upsilon$ is given by}
\begin{align}\label{upsilon_11}
\Upsilon=&2{\lambda _1} \beta \int_{\frac{{{\ell _o}}}{2}}^\infty  \left(x^2+r_o^2+h_o^2\right)^{-\alpha_\mathrm{L}/2}     dx \nonumber\\
&\qquad+2{\lambda_2} \beta \int_{\frac{{{\ell _o}}}{2}}^\infty  \left(x^2+r_1^2+h_o^2\right)^{-\alpha_\mathrm{L}/2}     dx.
\end{align}
\end{lemma}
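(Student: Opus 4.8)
The plan is to control the tail probability by the mean through Markov's inequality, and then to evaluate the mean interference in closed form using Campbell's theorem for the Poisson point processes $\Phi_1$ and $\Phi_2$. Since $I_o^{\mathrm{R}}(t)$ is a non-negative random variable, Markov's inequality gives
\begin{align}\label{markov_step}
\Pr\left(I_o^{\mathrm{R}}(t) \geq I_\mathrm{th}^{\mathrm{R}}\right) \leq \frac{\mathbb{E}\left[I_o^{\mathrm{R}}(t)\right]}{I_\mathrm{th}^{\mathrm{R}}},
\end{align}
so it suffices to show that $\mathbb{E}[I_o^{\mathrm{R}}(t)] = P_\mathrm{v}(t)\,\Xi_1\,\Upsilon$; the stated power constraint then forces the right-hand side of \eqref{markov_step} to be at most $\epsilon$.

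The first step is to evaluate $\mathbb{E}[I_o^{\mathrm{R}}(t)]$ for $I_o^{\mathrm{R}}(t)=\sum_{s=1}^2\sum_{i\in\Phi_s/o} P_i(t)\,G_\mathrm{b}^{{\rm v}_i}G_\mathrm{b}^{{\rm R}}\,L_i(t)$. Invoking the homogeneity of the power-control policy across vehicles (so that every interferer transmits at the common level $P_\mathrm{v}(t)$ of the typical vehicle) together with the independence of the random beam-steering gains from the vehicle locations, the per-interferer power and the gain product factor out of the sum, leaving
\begin{align}\label{factor_step}
\mathbb{E}\left[I_o^{\mathrm{R}}(t)\right] = P_\mathrm{v}(t)\,\Xi_1\,\mathbb{E}\left[\sum_{s=1}^2\sum_{i\in\Phi_s/o} L_i(t)\right],
\end{align}
with $\Xi_1=\mathbb{E}[G_\mathrm{b}^{{\rm v}_i}G_\mathrm{b}^{{\rm R}}]$. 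By Slivnyak's theorem the reduced process $\Phi_s/o$ remains a homogeneous PPP of density $\lambda_s$, so the conditioning on the presence of the typical vehicle may be dropped.

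Next I would apply Campbell's theorem to the remaining spatial sum. Parameterising an interfering vehicle by its horizontal coordinate $x$ along the road, its pathloss to the typical RSU is $\beta(x^2+r_o^2+h_o^2)^{-\alpha_\mathrm{L}/2}$ on the first lane and $\beta(x^2+r_1^2+h_o^2)^{-\alpha_\mathrm{L}/2}$ on the second. Because every vehicle within horizontal distance $\ell_o/2$ of the RSU is associated with that RSU and hence is not one of its interferers, the interferers occupy the region $|x|>\ell_o/2$, and the symmetry of the road about the RSU turns $\int_{|x|>\ell_o/2}$ into $2\int_{\ell_o/2}^{\infty}$. Stationarity of the PPP also makes this spatial average time-invariant, consistent with $\Upsilon$ carrying no $t$ dependence. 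Campbell's theorem applied lane by lane then yields $2\lambda_1\beta\int_{\ell_o/2}^{\infty}(x^2+r_o^2+h_o^2)^{-\alpha_\mathrm{L}/2}\,dx$ and $2\lambda_2\beta\int_{\ell_o/2}^{\infty}(x^2+r_1^2+h_o^2)^{-\alpha_\mathrm{L}/2}\,dx$, whose sum is exactly $\Upsilon$ as defined in \eqref{upsilon_11}. Combining this with \eqref{factor_step} and \eqref{markov_step} gives $\Pr(I_o^{\mathrm{R}}(t)\geq I_\mathrm{th}^{\mathrm{R}})\leq P_\mathrm{v}(t)\Xi_1\Upsilon/I_\mathrm{th}^{\mathrm{R}}$, which is at most $\epsilon$ precisely when $P_\mathrm{v}(t)\leq \epsilon I_\mathrm{th}^{\mathrm{R}}/(\Xi_1\Upsilon)$.

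I expect the main obstacle to be not the probabilistic machinery — Markov followed by Campbell is routine — but justifying the modelling reductions that make the factorisation in \eqref{factor_step} exact: that the interfering vehicles may be taken to transmit at the common level $P_\mathrm{v}(t)$, that the steering gains are statistically independent of the point pattern, and above all that the interferer region is precisely $|x|>\ell_o/2$. This last point rests on the nearest-RSU association rule and the scheduling assumption; treating all such vehicles as simultaneously active rather than one per cell is the pessimistic choice that keeps \eqref{markov_step} an honest upper bound, and I would flag it explicitly so the resulting transmit-power constraint is interpreted as conservative.
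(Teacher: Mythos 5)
Your proposal is correct and follows essentially the same route as the paper's own proof: Markov's inequality to bound the tail by the mean, then Campbell's theorem applied lane by lane (with the interferers restricted to horizontal distance beyond $\ell_o/2$ and a factor of $2$ from symmetry) to obtain $\mathbb{E}[I_o^{\mathrm{R}}(t)]=P_\mathrm{v}(t)\Xi_1\Upsilon$. The only difference is that you make explicit the modelling reductions (common transmit level for interferers, independence of the beam gains from the point pattern, Slivnyak for the reduced process) that the paper leaves implicit, which is a reasonable clarification rather than a departure.
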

\begin{proof}
According to the Markov's inequality, we have
\begin{align}\label{lema_1}
\Pr\left(I_o^{\mathrm{R}}\left(t\right) \geq  I_\mathrm{th}^{\mathrm{R}} \right) \leq \frac{\mathbb{E}\left[I_o^{\mathrm{R}}\left(t\right) \right]}{I_\mathrm{th}^{\mathrm{R}}}.
\end{align}
By using the Campbell's theorem~\cite{Baccelli2009}, $\mathbb{E}\left[I_o^{\mathrm{R}}\left(t\right) \right]$ is derived as
\begin{align}\label{exp_sum_interf}
&\mathbb{E}\left[I_o^{\mathrm{R}}\left(t\right) \right]=\mathbb{E}\left[\sum\limits_{i \in \Phi_1/o} {P_\mathrm{v}\left(t\right) {G_\mathrm{b}^{{\rm v}_i}}} G_\mathrm{b}^{{\rm R}} L_i\left(t\right)\right]+ \nonumber\\
&\qquad \mathbb{E}\left[\sum\limits_{i \in \Phi_2} {P_\mathrm{v}\left(t\right) {G_\mathrm{b}^{{\rm v}_i}}} G_\mathrm{b}^{{\rm R}} L_i\left(t\right)\right]= P_\mathrm{v}\left(t\right)\Xi_1 \Upsilon.
\end{align}
Note that in \eqref{exp_sum_interf}, {the minimum horizontal distance between the interfering vehicles and the RSU associated to the typical vehicle should be larger than $\frac{\ell_o}{2}$}, i.e., {\color{blue}the intra-RSU interference resulting from V2X transmissions can be mitigated by using multiple access techniques and directional beamforming.} Let $\Pr\left(I_o^{\mathrm{R}}\left(t\right) \geq  I_\mathrm{th}^{\mathrm{R}} \right) \leq \frac{\mathbb{E}\left[I_o^{\mathrm{R}}\left(t\right) \right]}{I_\mathrm{th}^{\mathrm{R}}} \leq \epsilon$ with ${\mathbb{E}\left[I_o^{\mathrm{R}}\left(t\right) \right]}$ given by \eqref{exp_sum_interf}, we have $P_\mathrm{v}\left(t\right)\leq \frac{\epsilon I_\mathrm{th}^{\mathrm{R}}}{\Xi_1 \Upsilon}$.
\end{proof}

\textbf{Lemma 2} shows that the level of vehicle's transmit power depends on the density of vehicles, in order to control interference. Moreover, {\color{blue}by leveraging narrower beams, the amount of interference is reduced and $\Xi_1$ becomes lower, which  allows larger transmit power to improve transmission rate and thus reduces latency.}  Since the transmit power allocation is carried out at each time slot, the time slot index $t$ is omitted in the following analysis  for notation simplicity. By relaxing the constraint C5  with the help of \textbf{Lemma 2}, problem \eqref{Problem_Trans1} is rewritten as
\begin{align}\label{probem_yy_p22}
&\mathop {\min }\limits_{{{C_{\rm in}},P_\mathrm{v}, P_\mathrm{\mathrm{R}} }}  - Q_o {C_{\rm in}}+\eta E_\mathrm{c}^{\mathrm{R}} +\eta\left(P_\mathrm{v}\frac{{C_{\rm in}}}{\overline{C}_o^{\mathrm{v}}}+P_\mathrm{R}
\frac{C_{\rm out}}
{{C}_o^{\mathrm{R}}}\right)\\
& \mathrm{s.t.} ~~ \mathrm{C2},~~\mathrm{C3},~~\mathrm{C4},~~\mathrm{C5}:  P_\mathrm{v}\leq  \frac{\epsilon I_\mathrm{th}^{\mathrm{R}}}{\Xi_1 \Upsilon}. \nonumber
\end{align}
Compared to problem \eqref{Problem_Trans1}, the solution of problem \eqref{probem_yy_p22} is robust and does not require global channel state information for interference coordination. To solve the above problem, we adopt the decomposition approach, considering the fact that ${C_{\rm in}}$ and $P_\mathrm{v}$ are coupled in the objective function of \eqref{probem_yy_p22}. { As such, problem \eqref{probem_yy_p22} is decomposed into two subproblems, i.e., for fixed  ${C_{\rm in}}$, we have
\begin{align}\label{probem_yy_p22_1}
&\mathop {\min }\limits_{{P_\mathrm{v}, P_\mathrm{\mathrm{R}} }} U_1=P_\mathrm{v}\frac{{C_{\rm in}}}{\overline{C}_o^{\mathrm{v}}}+P_\mathrm{R}
\frac{C_{\rm out}}
{{C}_o^{\mathrm{R}}}\\
& \mathrm{s.t.} ~~ \mathrm{C2}, \mathrm{C4}, \mathrm{C5}, \nonumber
\end{align}
and for fixed $\left(P_\mathrm{v},P_\mathrm{R}\right)$, we have the following linear program
\begin{align}\label{1D_LP}
&\mathop {\min }\limits_{{{C_{\rm in}}} }  - Q_o {C_{\rm in}}+\eta \varrho {C_{\rm in}} \vartheta f_\mathrm{R}^2 +\eta P_\mathrm{v}\frac{{C_{\rm in}}}{\overline{C}_o^{\mathrm{v}}}
\\
& \mathrm{s.t.} ~~ \mathrm{C2},~~\mathrm{C3}. \nonumber
\end{align}}

Since problem \eqref{probem_yy_p22_1} is still non-convex, we propose a successive convex approximation (SCA)-based method. Let ${\bf{y}}=\left(P_\mathrm{v}, P_\mathrm{\mathrm{R}}\right)$ denote the variable vector to optimize   and  ${\bf{y}}\left(\upsilon\right)=\left(P_\mathrm{v}\left(\upsilon\right), P_\mathrm{\mathrm{R}}\left(\upsilon\right)\right)$ denote the solution at the $\upsilon$-th iteration.  According to the  idea of SCA, the solution is obtained by
successively solving a sequence of subproblems as:
\begin{align}\label{semi_conduct11_11}
&{\hat{\bf{y}}}\left({\bf{y}}\left(\upsilon\right)\right)=\mathop {\mathrm{argmin} }\limits_{\bf{y}} \widetilde{U_1} \left(\bf{y};{\bf{y}}\left(\upsilon\right)\right) \\
& \mathrm{s.t.} ~~\mathrm{C2}, \mathrm{C4}, \mathrm{C5}, \nonumber 
\end{align}
where $\widetilde{U_1} \left(\bf{y};{\bf{y}}\left(\upsilon\right)\right)$ is the approximation for the objective function of problem \eqref{probem_yy_p22_1} at the $\upsilon$-th iteration,
 given by
\begin{align}\label{transform_tilde_U}
&\widetilde{U_1} \left(\bf{y};{\bf{y}}\left(\upsilon\right)\right)= P_\mathrm{v}\left(\upsilon\right)\frac{{C_{\rm in}}}{\overline{C}_o^{\mathrm{v}}}+ P_\mathrm{v}\frac{{C_{\rm in}}}{\overline{C}_o^{\mathrm{v}}\left(\upsilon\right)}+P_\mathrm{R}\left(\upsilon\right)
\frac{C_{\rm out}}{{C}_o^{\mathrm{R}}}\nonumber\\
&+P_\mathrm{R}
\frac{C_{\rm out}}{{C}_o^{\mathrm{R}}\left(\upsilon\right)}+\left(\bf{y}-{\bf{y}}\left(\upsilon\right)\right)^{T} {\bf{\Phi}}\left(\bf{y}\left(\upsilon \right)\right) \left(\bf{y}-{\bf{y}}\left(\upsilon\right)\right),
\end{align}
where ${\bf{\Phi}}\left(\bf{y}\left(\upsilon\right)\right)$ is a diagonal matrix with any positive entries and could depend
on $\bf{y}\left(\upsilon\right)$.
 The subproblem \eqref{semi_conduct11_11} in each iteration is strongly convex and can be solved by utilizing the Lagrangian-duality method~\cite{Convex_Book}.
 \textbf{Algorithm 1} outlines the main steps to solve  problem \eqref{probem_yy_p22_1}. In the procedure, the measure of optimality  $M({\bf{y}}\left(\upsilon \right))$ is defined based on~\cite{Cannelli_2017}, and the iteration complexity is $\mathcal{O}\left(\zeta^{-1}\right)$.
\setlength{\tabcolsep}{3 pt} \begin{table}[htbp] 
\centering
\begin{tabular}{l}
\hline
{\textbf{Algorithm 1} SCA-based Algorithm }\\ \hline
1: \quad Initialize feasible values ${\bf{y}}\left(0\right)$, $\Phi\left(\bf{y}\right)$. \\
    \qquad \quad  Let $\alpha=10^{-5}$, $\zeta=10^{-5}$, and $\upsilon=0$. \\
2: \quad  Calculate ${\hat{\bf{y}}}\left({\bf{y}}\left(\upsilon\right)\right)$ from  \eqref{semi_conduct11_11} by using Lagrangian-duality method. \\
3: \quad $\bf{if}$ $\parallel M({\bf{y}}\left(\upsilon \right)) \parallel_{2}^2 \leq \zeta$, stop. \\
4: \quad  Set ${\bf{y}}\left(\upsilon+1 \right)={\bf{y}}\left(\upsilon \right)+\delta\left(\upsilon\right)\left({\hat{\bf{y}}}\left({\bf{y}}\left(\upsilon\right)\right)-{\bf{y}}\left(\upsilon \right)\right)$ \\~~with the diminishing step-size rule $\delta\left(\upsilon\right)=\delta\left(\upsilon-1\right)\left(1-\alpha \delta\left(\upsilon-1\right)\right)$. \\
5: \quad Set $\upsilon\leftarrow\upsilon+1$, and return step 2. \\
\hline
\end{tabular}
\label{table:1}
\end{table}

After obtaining the solution $\left(P^{*}_\mathrm{v},P^{*}_\mathrm{R}\right)$ of problem \eqref{probem_yy_p22_1}, the corresponding ${C_{\rm in}}$ is
updated by solving problem \eqref{1D_LP}, which is given in three different cases:
\begin{itemize}
\item Case 1: $ \eta=0$. In this case, the energy consumption is ignored, and vehicle $o$ only needs to maximize the amount of computing tasks at each time slot. The optimal $C_\mathrm{in}$ is  ${C^*_{\rm in}}=\min\left\{\frac{\frac{\ell_o-\mathrm{mod}(V_o t,\ell_o)}{V_o}-\frac{C_{\rm out}}{{C}_o^{\mathrm{R}}}}{\frac{1}{\overline{C}_o^{\mathrm{v}}}+\frac{\vartheta }{f_\mathrm{R}}},Q_o\right\}$ with $P^*_\mathrm{v}=\min\left\{ \frac{\epsilon I_\mathrm{th}^{\mathrm{R}}}{\Xi_1 \Upsilon},P_\mathrm{v}^{\max}\right\}$ and $ P^*_\mathrm{R} = P_\mathrm{R}^{\max}$ in light of the constraints $\mathrm{C2}-\mathrm{C5}$ given in \eqref{probem_yy_p22}.

  \item Case 2: $0< \eta<\frac{Q_o}{\varrho \vartheta f_\mathrm{R}^2 +\frac{P^{*}_\mathrm{v}}{\overline{C}_o^{\mathrm{v}}}}$. In this case,  the priority is to offload the maximum allowable amount of computing tasks for a typical vehicle at this time slot. {Thus, ${C^*_{\rm in}}=\min\left\{\frac{\frac{\ell_o-\mathrm{mod}(V_o t,\ell_o)}{V_o}-\frac{C_{\rm out}}{{C}_o^{\mathrm{R}}}}{\frac{1}{\overline{C}_o^{\mathrm{v}}}+\frac{\vartheta }{f_\mathrm{R}}},Q_o\right\}$ with the solution $\left(P^{*}_\mathrm{v},P^{*}_\mathrm{R}\right)$ of problem \eqref{probem_yy_p22_1}.}

  \item Case 3 :  $\eta \geq \frac{Q_o}{\varrho \vartheta f_\mathrm{R}^2 +\frac{P^{*}_\mathrm{v}}{\overline{C}_o^{\mathrm{v}}}}$. In this case,  ${C^*_{\rm in}}=0$ for energy saving,  since energy consumption needs to be controlled at this time slot.
\end{itemize}

As such, the solution of problem \eqref{probem_yy_p22} can be iteratively obtained, which is a robust and suboptimal solution of original problem \eqref{Problem_11_final}.  Note that since we have relaxed the variable $C_{\rm in}$, the desired $C^*_{\rm in}$ is the nearest allowable amount of computing tasks to the obtained solution ${C_{\rm in}}$ of problem \eqref{probem_yy_p22},  to ensure $C_{\rm in} \in \mathbb{Z}^{+}$.
As such, we propose \textbf{Algorithm 2} to solve our stochastic optimization problem \eqref{Problem_formulation}.

\setlength{\tabcolsep}{3 pt}\begin{table}[htbp]  \label{Stochastic_optimization} 
\centering
\begin{tabular}{l}
\hline
\textbf{Algorithm 2} Dynamic Computing-aware Power Allocation  Algorithm\\ \hline
1: {\bf{if}} $t=0$, \bf{then} \\
2: \quad{\bf{Initialize}} the control variable $\eta=\eta_o$, ${C_{\rm in}}\left(0\right)=D_o\left(0\right)$. \\
    \qquad  Observe the computing task queue length ${Q_{o}}\left( t\right)$.\\
3: \bf{else}  \\
4: \quad {\bf{repeat}}\\
5:  \qquad  Computing task and transmit power control decisions:   \\
\qquad \qquad  \textbf{Loop:}\\
 \qquad \qquad~~ a) Given $C_{\rm in}\left(t\right)$, \\
 \qquad \qquad~~~~~~~~~update $P_z\left(t\right), z \in \left\{\mathrm{v}, \mathrm{R} \right\} $ via \textbf{Algorithm 1}. \\
\qquad \qquad~~ b) Update $C_{\rm in}\left(t\right)$ based on the solution of problem \eqref{1D_LP}.\\
\qquad \qquad  \textbf{Until convergence}. \\
6: \qquad $t=t+1$.   \\
7: \qquad Update the computing task queue length based on \eqref{Data_Queue_model}.  \\
8: \quad \textbf{Until} $t=T_{\mathrm{end}}$. \\
9:  \bf{end if}\\
\hline
\end{tabular}
\label{table 1}
\end{table}

\section{Numerical Results}
\setlength{\tabcolsep}{0.25 pt}\begin{table*}[thb]   
\centering
\caption{Simulation Parameters}\label{table1}
\begin{tabular}{|c|c|c|c|}
 \hline
Parameter & {Value} & {Parameter} & {Value}\\
 \hline
\makecell{ Absolute antenna elevation \\difference between vehicle and RSU } &  6 m & \makecell{ Perpendicular distance between \\a vehicle on the 1st lane  and RSU} & 7 m\\
 \hline
  \makecell{Perpendicular distance between \\a vehicle on the 2nd lane  and RSU }& 10 m &   RSU site-distance & 50 m\\
  \hline
  Vehicle speed & 50 km/h & Pathloss exponent & 2.0\\
   \hline
Carrier frequency   & $f_c=60$ GHz &  Frequency dependent constant value & $\left(\frac{{{3 \times 10^8}}}{{4\pi {f_c}}}\right)^2$\\
  \hline
 System bandwidth   & 2 GHz &  RSU maximum transmit power & 35 dBm \\
  \hline
  Vehicle maximum transmit power & 25 dBm & Noise power &  $-174+10 \times \log_{10}(\mathrm{Bandwidth})+7 (\mathrm{Noise\;figure})$ dBm\\
  \hline
RSU main-lobe gain &  15 dB & RSU side-lobe gain &  -15 dB \\
 \hline
Vehicle main-lobe gain &  3 dB & Vehicle side-lobe gain &  -3 dB \\
 \hline
 RSU beamwidth & $9^{o}$ & Vehicle beamwidth & $90^{o}$ \\
 \hline
 Each task size & $10*10^6$ (bits) & RSU computing capacity & $10*10^{9}$ CPU cycles/s \\
 \hline
 Required CPU cycles per bit & 300 & Switched capacitance constant & $10^{-28}$ \\
 \hline
\end{tabular}
\end{table*}
{\color{blue}We simulate the formulated problem, and run it for $T_{\mathrm{end}}=3000$ time slots.} The remainder of parameters are summarized in Table I.  The number of new tasks per time slot follows a Poisson distribution with the density $\lambda_\mathrm{Task}$, and the output data size (in bits) follows a uniform distribution over the interval $\left[1,10^6\right]$. {\color{blue}Since the amount of interference cannot be easily obtained in such a complicated and fast-changing network with mobility, we consider the worst interfering environment that the amount of interference in each time slot is equal to the level of  interference temperature, and thus calculate the energy consumption per time slot in a lower-bound form as}
{\begin{align*}
E_\mathrm{Low}(t)=E_\mathrm{c}^{\mathrm{R}}\left(t\right)+P^{*}_\mathrm{v}\left(t\right)\frac{C^{*}_{\rm in}\left(t\right)}{\overline{C}_o^{\mathrm{v}}\left(t\right)}+P^{*}_\mathrm{R}\left(t\right)\frac{C_{\rm out}}{{C}_o^{\mathrm{R}}\left(t\right)},
\end{align*}
where $\left(P^{*}_\mathrm{v}\left(t\right),C^{*}_{\rm in}\left(t\right),P^{*}_\mathrm{R}\left(t\right)\right)$ is the optimal control solution at time $t$ of problem \eqref{Problem_11_final} }. Note that we also provide the maximum level of energy consumption for comparison. Such a case occurs in the noise-limited scenario (i.e., vehicles are not very dense.).

Fig.~\ref{fig1} shows the effects of different levels of interference temperature and densities of vehicles on the delay and energy consumption. {\color{blue}The average delayed number of tasks for execution and the average energy consumption are calculated as $\sum\limits_{t{\rm{ = }}1}^{{T_{end}}} {{Q_o}\left( t \right)}/T_{\mathrm{end}}$ and $\sum\limits_{t{\rm{ = }}1}^{{T_{end}}} {E_\mathrm{Low}\left( t \right)}/T_{\mathrm{end}}$, respectively.} It is obvious that there exists a tradeoff between the task execution delay and energy consumption, since more
tasks that can be computed by RSU will inevitably bring more energy cost. When the vehicle's density is lower, more tasks can be offloaded to the RSU for reducing vehicle's computing burden since larger vehicle and RSU's transmit power is allowed. {\color{blue}When vehicles become denser, lower interference temperature means that vehicle's transmit power will be
controlled in a much lower level, which results in higher delay during uplink  transmissions, and thus less tasks can be
offloaded to RSU.} It is confirmed that {\color{blue}by using the proposed algorithm, interference can be well coordinated, and the
achievable performance in a dense case can be close to that of noise-limited case (where there are few interfering vehicles in the lane, e.g., $\lambda_1=\lambda_2=0.001/\mathrm{m}$ in Fig. 2.).}

\begin{figure}[htbp]
\vspace{-0.7 cm}
\centering
\includegraphics[width=3.6 in,height=2.4 in]{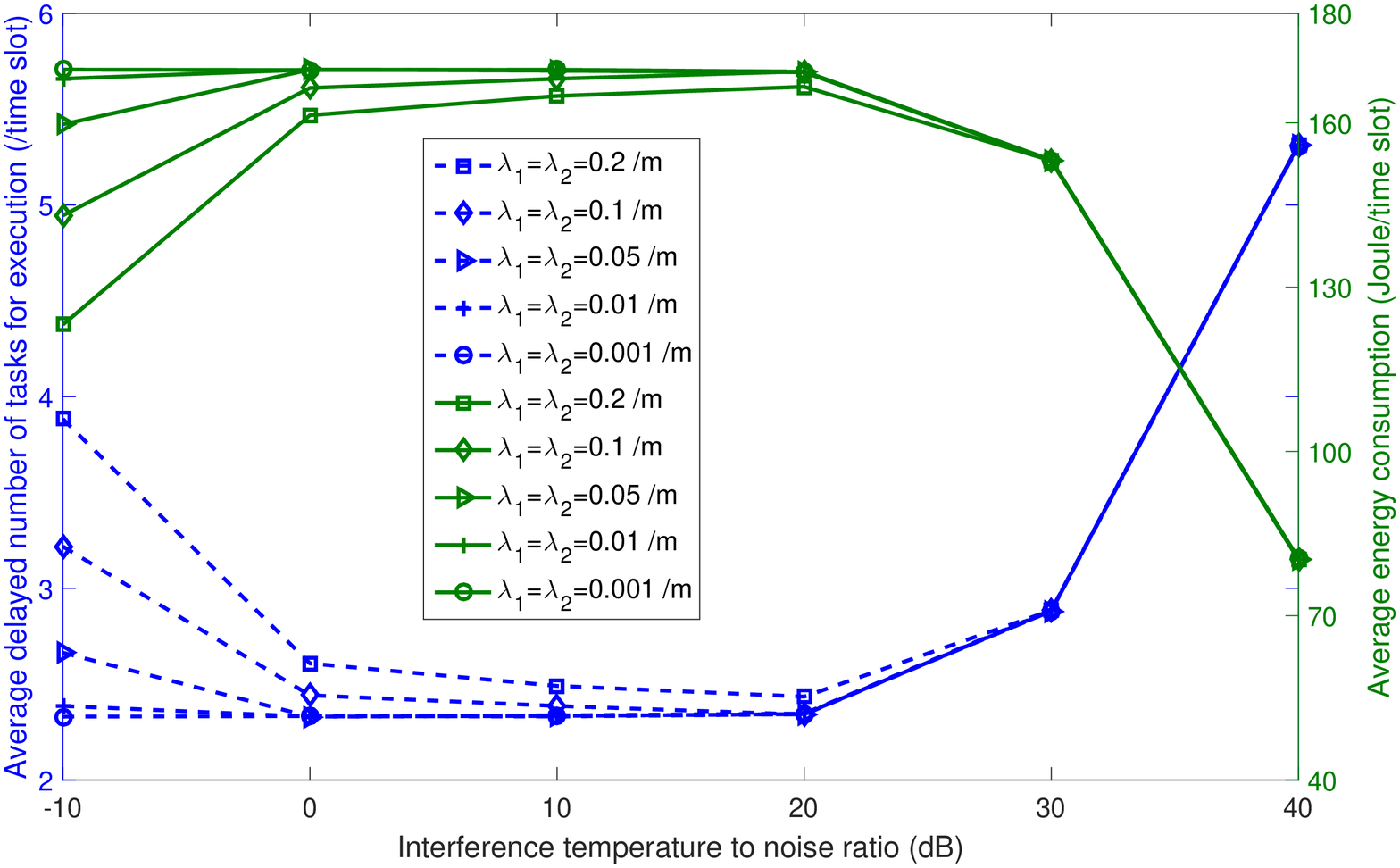} 
\caption{Effects of different levels of interference temperature and densities of vehicles: $\lambda_\mathrm{Task}=8/\mathrm{time~slot}$, $\epsilon=0.1$ and $\eta=10^{14}$.} \vspace{-0.3cm}
\label{fig1}
\end{figure}

%

\begin{figure}[htbp]
\vspace{-0.1 cm}
\centering
\includegraphics[width=3.6 in,height=2.4 in]{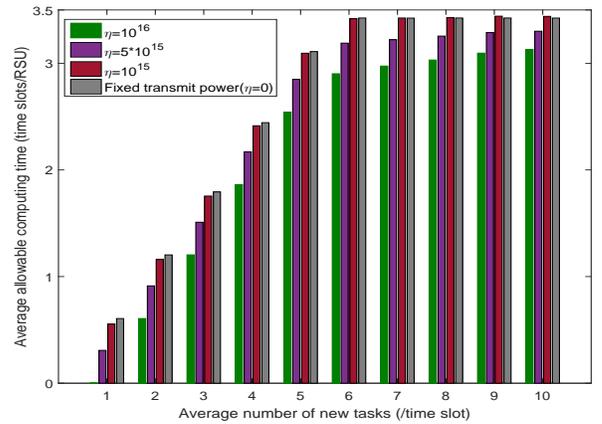}
 \caption{Effects of different average numbers of tasks per time slot $\lambda_\mathrm{Task}$ and control values $\eta$: $I_\mathrm{th}^{\mathrm{R}}=20\;\mathrm{dB}$, $\lambda_1=\lambda_2=0.1/\mathrm{m}$ and $\epsilon=0.1$.}
\label{fig2}
\vspace{-0.4 cm}
\end{figure}

Fig.~\ref{fig2} shows the effects of different $\lambda_\mathrm{Task}$ and $\eta$ on the average allowable computing time in an RSU range.
 {\color{blue} The average allowable computing time is calculated as $\sum\limits_{t{\rm{ = }}1}^{{T_{\mathrm{end}}}} {\sum\limits_{j{\rm{ = }}1}^3 {{\tau _{j{\rm{,}}t}}} }/{T_{\mathrm{end}}}$ where ${\tau _{j{\rm{,}}t}}$ is defined in \eqref{offload_constraint}.} It is obvious that more computing  tasks consumes more time. The average allowable computing time converges to a constant value for larger $\lambda_\mathrm{Task}$, since computing time is limited to meet the constraint given by \eqref{offload_constraint}. {{\color{blue}Compared to the case that energy consumption is ignored (i.e., the control variable $\eta=0$), the proposed solution can well control the computing time by selecting proper $\eta$ value, to achieve a tradeoff between energy consumption and computing latency.}}

\section{Conclusion}
This paper considered edge computing empowered dense mmWave V2X networks, in which stochastic geometry is adopted to model such a random network with vast vehicle connections.  By using Lyapunov and SCA-based optimization theory, an efficient online algorithm was proposed. The results confirmed that the proposed algorithm can efficiently control the amount of energy consumption and computing tasks.




\bibliographystyle{IEEEtran}

\end{document}